\newcommand{\aladin}{\textsc{aladin}\xspace}
\newcommand{\admm}{\textsc{admm}\xspace}
\newcommand{\opf}{\textsc{opf}\xspace}
\newcommand{\ac}{\textsc{ac}\xspace}
\newcommand{\ieee}{\textsc{ieee}\xspace}
\begin{document}

\title{Feasibility vs. Optimality in Distributed AC OPF---\\A Case Study Considering ADMM and ALADIN 
\thanks{{This work is part of  a project that receives funding from the European Union's Horizon 2020 research and innovation program under grant agreement No. 730936. TF acknowledges further support from the Baden-W\"urttemberg Stiftung under the Elite Programme for Postdocs.}}
}

\titlerunning{Feasibility vs. Optimality in Distributed AC OPF}        

\author{Alexander Engelmann$^\star$        \and
	Timm Faulwasser$^\star$ 
}


\institute{$^\star$ Institute for Automation and Applied Informatics (IAI), Karlsruhe Institute of Technology (KIT),  
	Hermann-von-Helmholtz-Platz 1, 76344 Eggenstein-Leopoldshafen, Germany. \\
	\email{alexander.engelmann@kit.edu, timm.faulwasser@ieee.org}           
}


\maketitle

\begin{abstract}
This paper investigates the role of feasible initial guesses and large con\-sensus-violation penalization in distributed optimization for Optimal Power Flow (\opf) problems. Specifically, we discuss the behavior of the Alternating Direction of Multipliers Method (\admm). We show that in case of large consensus-violation penalization \admm might exhibit slow progress. We support this observation by an analysis of the algorithmic properties of \admm. Furthermore, we illustrate our findings considering the \ieee 57 bus system and we draw upon a comparison of \admm  and the Augmented Lagrangian Alternating Direction Inexact Newton (\aladin) method. 
	\keywords{Distributed Optimal Power Flow \and \aladin \and \admm}
\end{abstract}

\section{Introduction}

Distributed {optimization algorithms} for \ac Optimal Power Flow (\opf) recently gained significant interest as these problems are inherently non-convex and often large-scale; i.e. comprising up to several thousand buses \cite{Guo2017}. Distributed optimization is considered to be helpful as it allows splitting large \opf problems into several smaller subproblems; thus reducing complexity and avoiding the exchange of full grid models between subsystems. We refer to \cite{Molzahn2017} for a recent overview of distributed optimization and control approaches in power systems.

One frequently discussed \emph{convex} distributed optimization method is the Alternating Direction of Multipliers Method (\admm) \cite{Boyd2011}, which is also applied in context of \ac \opf \cite{Guo2017,Erseghe2014a,Kim2000}. 
\admm often yields promising results even for large-scale power systems \cite{Erseghe2014a}. However, \admm sometimes requires a specific partitioning technique and/or a feasible initialization in combination with high consensus-violation penalization parameters to converge \cite{Guo2017}.  
The requirement of feasible initialization seems quite limiting as it requires solving a centralized inequality-constrained power flow problem requiring full topology and load information leading to approximately the same complexity as full \opf.

In previous works \cite{Engelmann2019,Engelmann18a,Murray2018} we suggested applying the Augmented Lagrangian Alternating Direction Inexact Newton (\aladin) method  to stochastic and deterministic \opf problems ranging from 5 to 300 buses.
In case a certain line-search is applied \cite{Houska2016}, \aladin provides global convergence guarantees to local minimizers for non-convex problems without the need of feasible initialization. 
The results in \cite{Engelmann2019} underpin that \aladin is able to outperform \admm  in many cases. This comes at cost of a higher  per-step  information exchange compared with \admm and a more complicated coordination step, cf. \cite{Engelmann2019}. 

In this paper we investigate the interplay of feasible initialization with high penalization for consensus violation in \admm for distributed \ac \opf.  We illustrate our findings on the \ieee 57-bus system. Furthermore, we compare the convergence behavior of \admm to  \aladin not suffering from the practical need for feasible initialization \cite{Houska2016}. 
Finally, we provide theoretical results supporting our numerical observations for the convergence behavior of \admm.

The paper is organized as follows: In Section \ref{sec:ALADINandADMM} we briefly recap \admm and \aladin including their convergence properties. Section \ref{sec:NumRes} shows numerical results for the \ieee 57-bus system focusing on the influence of the penalization parameter $\rho$ on the convergence behavior of \admm. Section \ref{sec:Theory} presents an analysis of the  interplay between high penalization and a feasible initialization.

\section{ALADIN and ADMM} \label{sec:ALADINandADMM}
{For distributed optimization, \opf problems are often formulated in \emph{affinely coupled separable form}
\begin{equation} \label{eq:OPF}
		\min_{x\in\mathbb{R}^{n_x}} \, \sum_{i\in \mathcal{R}} f_i(x_i) \quad\text{subject to}\quad x_i \in \mathcal{X}_i,\; \forall\, i \in \mathcal{R} \quad\text{and}\quad
	\sum_{i\in \mathcal{R}}A_i x_i=0,
\end{equation}
where the decision vector is divided into sub-vectors $x^\top=[x_1^\top,\dots,x_{|\mathcal{R}|}^\top] \in \mathbb{R}^{n_x}$, $\mathcal{R}$ is the index set of subsystems usually representing geographical areas of a power system and local nonlinear constraint sets $\mathcal{X}_i:=\{x_i \in \mathbb{R}^{n_{xi}} \; | \; h_{i}(x_i) \leq 0\}$. }Througout this work we assume that $f_i$ and $h_i$ are twice continuously differentialble and that all $\mathcal{X}_i$ are compact. Note that the objective functions $f_i:\mathbb{R}^{n_{xi}}\rightarrow \mathbb{R}$ and nonlinear inequality constraints $h_i: \mathbb{R}^{n_{xi}}\rightarrow \mathbb{R}^{n_{hi}}$ only depend on $x_i \in \mathbb{R}^{n_i}$ and that coupling between them takes place in the affine  consensus constraint $\sum_{i\in \mathcal{R}}A_i x_i=0$ only. There are several ways of formulating \opf problems in form of \eqref{eq:OPF} differing in the coupling variables and the type of the power flow equations (polar or rectangular), cf. \cite{Engelmann2019,Erseghe2014a,Kim1997}. 

\setlength{\textfloatsep}{3pt}
\begin{algorithm}[t]
	\caption{ADMM}
	\small
	\textbf{Initialization:} Initial guesses $z_i^0,\lambda_i^0$ for all $i \in \mathcal{R}$, parameter $\rho$. \\
	\textbf{Repeat} (until convergence):
	\begin{enumerate}
		\item \textit{Parallelizable Step:} Solve for all $i\in \mathcal{R}$ locally
		\begin{align}\label{eq:locStepADM}
		x_i^k=\underset{x_i}{\text{arg}\text{min}}&\quad f_i(x_i) + (\lambda^k_i)^\top A_i x_i + \frac{\rho}{2}\left\|A_i(x_i-z_i^k)\right\|_2^2\quad
		\text{s.t.} \quad h_i(x_i) \leq 0.		
		\end{align}
		
		\item \textit{Update dual variables} 
		\begin{equation} \label{eq:dualUp}
		\lambda_i^{k+1} = \lambda_i^{k} + \rho A_i(x_i^k-z_i^k)
		\end{equation}	
		\item \textit{Consensus Step:} Solve the coordination  problem
		\begin{subequations} \label{eq:ADMconsStep}
			\begin{align}
			&\underset{\Delta x}{\text{min}}\;\sum_{i\in \mathcal{R}} \textstyle \frac{\rho}{2}\Delta x_i^\top A_i^\top A_i\Delta x_i + \lambda_i^{k+1\top}A_i \Delta x_i \\ 
			&\begin{aligned} \label{eq:ConsCnstrADM}
			\text{s.t.}                                &\;\;\, \sum_{i\in \mathcal{R}}A_i(x^k_i+\Delta x_i) =  0     &&
			\end{aligned}
			\end{align}
		\end{subequations}
		\item \textit{Update variables} 
		$\quad  \label{eq::update} 
		z^{k+1}\leftarrow  x^k + \Delta x^k.
		$
		
	\end{enumerate}
	\label{alg:ADMM}
\end{algorithm}
\begin{algorithm}[h]
	\caption{ALADIN (full-step variant)}
	\small
	\textbf{Initialization:} Initial guess $(z^0,\lambda^0)$, parameters $\Sigma_i\succ 0,\rho,\mu$. \\
	\textbf{Repeat} (until convergence):
	\begin{enumerate}
		\item \textit{Parallelizable Step:} Solve for all $i\in \mathcal{R}$ locally
		\begin{align}\notag
		x_i^k=\underset{x_i}{\text{arg}\text{min}}&\quad f_i(x_i) + (\lambda^k)^\top A_i x_i + \frac{\rho}{2}\left\|x_i-z_i^k\right\|_{\Sigma_i}^2\quad
		\text{s.t.}\quad h_i(x_i) \leq 0 \quad \mid \kappa_i^k.		
		\end{align}
		\item \textit{Compute sensitivities:} 
		Compute Hessian approximations $B_i^k$, gradients $g_i^k$ and Jacobians of the active constraints $C_i^k$, cf. \cite{Houska2016}.
		\item \textit{Consensus Step:} Solve the coordination  problem
		\begin{align} 
		\notag
		&\underset{\Delta x,s}{\text{min}}\;\sum_{i\in \mathcal{R}}\left\{\textstyle \frac{1}{2}\Delta x_i^\top B^k_i\Delta x_i + {g_i^k}^\top \Delta x_i\right\} \hspace{-0.1em}  + \hspace{-0.1em} {\lambda^k}^\top  \hspace{-0.2em} s + \hspace{-0.1em}  \textstyle \frac{\mu}{2}\|s\|^2_2   \\ 
		&\begin{aligned} 
		\text{s.t.}                                &\;\;\, \sum_{i\in \mathcal{R}}A_i(x^k_i+\Delta x_i) =  s     &&|\, \lambda^\text{QP}\\\label{eq::conqp} 
		&\;\;\;  C^k_i \Delta x_i = 0                                    &&\forall i\in \mathcal{R}.
		\end{aligned}
		\end{align}
		\item \textit{Update variables} 
		$ \quad
		z^{k+1}\leftarrow  x^k + \Delta x^k,\;\;\; \lambda^{k+1} \leftarrow  \lambda^\mathrm{QP}.
		$

		\textbf{Similarity to ADMM:} Remove $C_i^k$ in \eqref{eq::conqp}, set $B_i^k=\rho A_i^\top A_i, g_i^k=A_i^\top \lambda^k$ and $\Sigma_i=A_i^\top A_i$, set $\mu = \infty$ (i.e. $s=0$) and use dual ascent step \eqref{eq:dualUp} for updating $\lambda^k$ in \eqref{eq::update}, cf. 
		\cite{Houska2016}.
	\end{enumerate}
	\label{alg:ALADIN}
\end{algorithm}

Here, we are interested in solving Problem \eqref{eq:OPF} via \admm and \aladin summarized in Algorithm \ref{alg:ADMM} and Algorithm \ref{alg:ALADIN} receptively.\footnote{{We remark that there exist a variety of variants of \admm, cf. \cite{Boyd2011,Bertsekas1989}. Here, we choose the formulation from \cite{Houska2016} in order to highlight similarities between \admm and \aladin.}}$^,$\footnote{Note that, due to space limitations, we describe the full-step variant of \aladin here. To obtain convergence guarantees from a remote starting point, a line-search is necessary, cf. \cite{Houska2016}.}   
At first glance it is apparent that \admm and \aladin share several features. For example, in Step 1) of both algorithms, local augmented Lagrangians subject to local nonlinear inequality constraints $h_i$ are minimized in parallel.\footnote{For notational simplicity, we only consider nonlinear inequality constraints here. Nonlinear equality constraints $g_i$ can be incorporated via a reformulation in terms of two inequality constraints, i.e. $0\leq g_i(x_i) \leq 0$. } Observe that while \admm maintains multiple local Lagrange multipliers $\lambda_i$,  \aladin considers one global Lagrange multiplier vector $\lambda$. 
In Step 2), \aladin computes sensitivities $B_i$, $g_i$ and $C_i$ (which often can directly be obtained from the local numerical solver without additional computation) and \admm updates the multiplier vectors $\lambda_i$.

In Step 3), both algorithms communicate certain information to a central entity which then solves a (usually centralized) coordination quadratic program. However, \aladin and \admm differ in the amount of exchanged information: Whereas \admm only communicates the the local primal {and dual} variables $x_i$ {and $\lambda_i$}, \aladin additionally communicates sensitivities. This is a considerable amount of extra information compared with \admm.  However, there exist methods to reduce the amount of exchanged information and bounds on the information exchange are given in \cite{Engelmann2019}. Another important difference is the computational complexity of the coordination step. In many cases, the coordination step in \admm can be reduced to an averaging step based on neighborhood communication only \cite{Boyd2011}, whereas in \aladin the coordination step involves the centralized solution of an equality constrained quadratic program.

In the last step, \admm updates the primal variables $z_i$, while \aladin additionally updates the dual variables $\lambda$.
Differences of \aladin and \admm also show up in the convergence speed and their theoretical convergence guarantees: Whereas \aladin guarantees global convergence  and quadratic local convergence for non-convex problems if a certain line-search is applied  \cite{Houska2016}, few results exist for \admm in the non-convex setting.
Recent works \cite{Wang2019,Hong2016} investigate the convergence of \admm for special classes of non-convex problems; however, to the best of our knowledge \opf problems do not belong to these classes.

\section{Numerical Results} \label{sec:NumRes}
Next, we investigate the behavior of \admm for large $\rho$ and a feasible initialization to illustrate performance differences between \admm and \aladin.
We consider power flow equations in polar form with coupling in active/reactive power and voltage angle and magnitude at the boundary between two neighbored regions \cite{Engelmann2019}. 
We consider the \ieee 57-bus system with data from \textsc{matpower} and partitioning as in \cite{Engelmann2019} as numerical test case. 

Figures \ref{fig:ADMAL1e4} and \ref{fig:ADMAL1e6} show the convergence behavior of \admm (with and without feasible initialization (f.)) and \aladin   for several convergence indicators and two different penalty parameters $\rho = 10^4$ and $\rho = 10^6$.\footnote{The scaling matrices $\Sigma_i$ are diagonal. They scale to improve convergence. Hence, entries corresponding to voltages and phase angles are $100$, entries corresponding to powers are $1$.}
Therein, the left-handed plot depicts the consensus gap $\|Ax^k\|_\infty$  representing the maximum mismatch of coupling variables (active/reactive power and voltage magnitude/angle) at borders between two neighbored regions. The second plot shows the objective function  value $f^k$ and the third plot presents the distance to the minimizer $\|x^k-x^\star\|_\infty$ over the iteration index $k$. The right-handed figure shows the nonlinear constraint violation $\|g(z^k)\|_\infty$  after the consensus steps \eqref{eq:ADMconsStep} and \eqref{eq::conqp} of Algorithm \ref{alg:ADMM} and \ref{alg:ALADIN} respectively  representing the maximum violation of the power flow equations.\footnote{The power flow equations for the \ieee 57-bus systems are considered as nonlinear equality constraints $g_i(x_i)=0$. Hence, $g_i(x_i)\neq0$ represents a violation of the power flow equations.}

\begin{figure}
	\centering
	{\includegraphics[trim={2.2cm 0 1.9cm 0},clip,width=1.\textwidth]{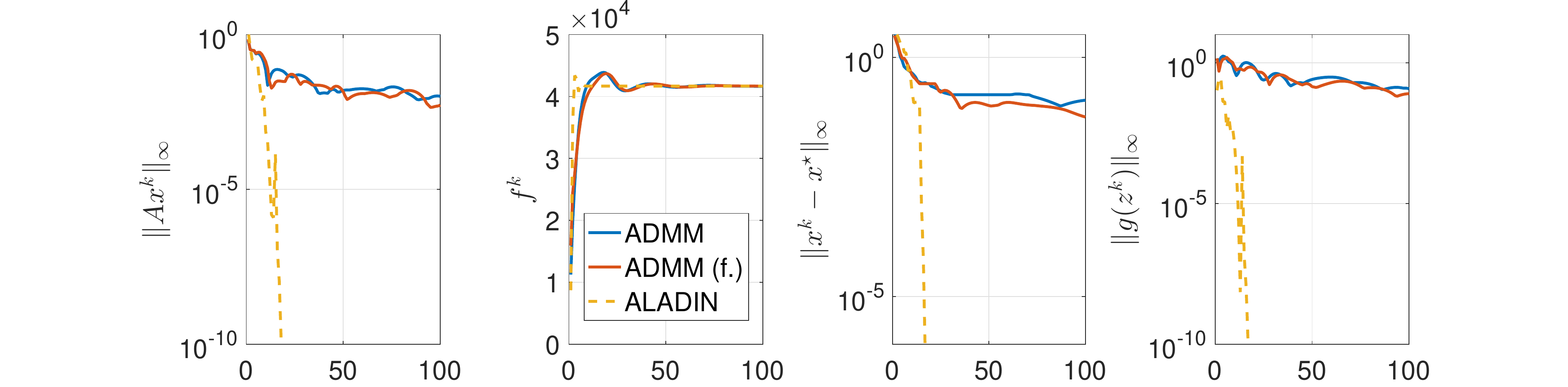}}
	\caption{Convergence behavior of \admm with infeasible initialization, \admm with feasible intialization (f.) for $\rho = 10^{4}$ and \aladin.}
	\vspace{-1em}
	\label{fig:ADMAL1e4}
\end{figure}
\begin{figure}
	\centering
	{\includegraphics[trim={2.2cm 0 1.9cm 0},clip,width=1.\textwidth]{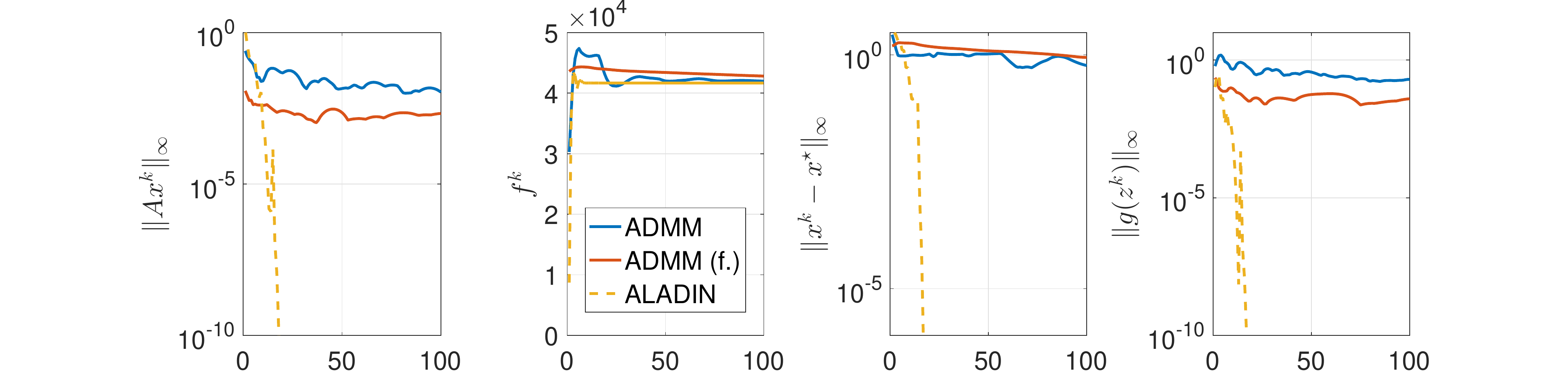}}
	\caption{Convergence behavior of \admm with infeasible initialization, \admm with feasible intialization (f.) for $\rho = 10^{6}$ and \aladin.}
	\label{fig:ADMAL1e6}
\end{figure}

In case of small $\rho = 10^4$, both \admm variants behave similar and converge slowly towards the optimal solution with slow decrease in consensus violation, nonlinear constraint violation and objective function value. 
If we increase $\rho$ to $\rho=10^{6}$ with results shown in Figure \ref{fig:ADMAL1e6}, the consensus violation $\|Ax^k\|_\infty$ gets smaller in \admm with feasible initialization. 
The reason is that a large $\rho$  forces $x_i^k$ being close to $z_i^k$ leading to small $\|Ax^k\|_\infty$ as we have $Az^k=0$ from the consensus step. 
But, at the same time, this also leads to a slower progress in optimality $f^k$ compared to $\rho =10^4$, cf. the second plot in Figures \ref{fig:ADMAL1e4} and \ref{fig:ADMAL1e6}.

{On the other hand, this statement does not hold for \admm with infeasible initialization (blue lines in Figures \ref{fig:ADMAL1e4} and \ref{fig:ADMAL1e6}) as the constraints in the local step and the consensus step of \admm enforce an alternating projection between the consensus constraint and the local nonlinear constraints. 
The progress in the nonlinear constraint violation $\|g(z^k)\|_\infty$ supports this statement. In its extreme, this behavior can be observed  when using $\rho = 10^{12}$ depicted in Figure \ref{fig:ADMAL1e12}. There, \admm with feasible initialization produces very small consensus violations and nonlinear constraint violations at cost of almost no progress in terms of optimality. }

Here the crucial observation is that in case of feasible initialization and large penalization parameter $\rho$ \admm produces \emph{almost feasible} iterates at cost of slow progress in the objective function values. 
From this, one is tempted to conclude that also for infeasible initializations \admm will likely converge, cf. Figure \ref{fig:ADMAL1e4}-\ref{fig:ADMAL1e12}. This conclusion is supported by the rather small 57-bus test system. However, it deserves to be noted that this conclusion is in general not valid, cf.  \cite{Guo2017}.

For \aladin, we use $\rho = 10^6$ and $\mu = 10^7$. 
Comparing the results of \admm with \aladin, \aladin shows superior quadratic convergence also in case of infeasible initialization. 
This is inline with the known convergence properties of \aladin \cite{Houska2016}. 
However, the fast convergence comes at the cost of increased communication overhead per step, cf. \cite{Engelmann2019} for a more detailed discussion.  Note that \aladin involves a more complex coordination step, which is not straightforward to solve via neighborhood communication. Furthermore, tuning of $\rho$ and $\mu$ can be difficult.

In power systems, usually feasibility is  preferred over optimality to ensure a stable system operation. Hence, \admm with feasible initialization and large $\rho$ can in principle be used for \opf as in this case violation of power flow equations and generator bounds are expected to be small and one could at least expect certain progress towards an optimal solution. Following this reasoning several papers consider $\|A(x^k-z^k)\|_\infty <\epsilon$  as termination criterion \cite{Erseghe2014a,Guo2017}. However, as shown in the example above, if $\rho$ is large enough, and \admm is initialized at a feasible point, this termination criterion can always be satisfied in just one iteration if $\rho$ is chosen sufficiently large. Consequently, it is unclear how to ensure a certain degree of optimality. 
An additional question with respect to \admm is how to obtain a feasible initialization. To compute such an initial guess, one  has to solve a constrained nonlinear least squares problem solving the power flow equations subject to box constraints. This is itself a problem of almost the same complexity as the full \opf problem. Hence one would again require a computationally powerful central entity with full topology and parameter information. Arguably this jeopardizes the initial motivation for using distributed optimization methods.

\begin{figure}
	\centering
	{\includegraphics[trim={2.2cm 0 1.9cm 0},clip,width=1\textwidth]{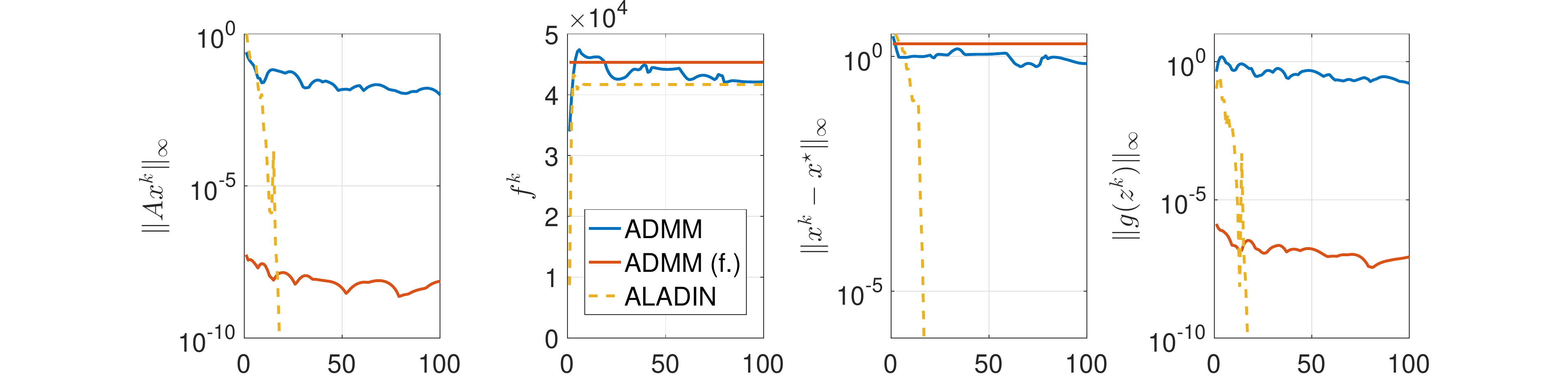}}
	\caption{Convergence behavior of \admm with infeasible initialization, \admm with feasible intialization (f.) for $\rho = 10^{12}$ and \aladin.}
	\label{fig:ADMAL1e12}
\end{figure}

\section{Analysis of ADMM with feasible initialization for $\rho \rightarrow \infty$} \label{sec:Theory}
The results above indicate that large penalization parameters $\rho$ in combination with feasible initialization might lead to pre-mature convergence to a suboptimal solution. 
Arguably, this behavior of \admm might be straight-forward to see from an optimization perspective. However, to the best of our knowledge a mathematically rigorous analysis, which is very relevant for \opf, is not available in the literature. 

{\begin{proposition}[Feasibility and $\rho \rightarrow \infty$ imply $x_i^{k+1}- x_i^k\in\operatorname{null}(A_i)$] \label{lem:ADMMfeas}\\
Consider the application of \admm (Algorithm \ref{alg:ADMM}) to Problem \eqref{eq:OPF}.
Suppose that, for all $k \in \mathbb{N}$, the local problems \eqref{eq:locStepADM} have unique regular minimizers $x_i^k$.\footnote{A minimizer is regular if the gradients of the active constraints are linear independent \cite{Nocedal2006}.}  For $\tilde k\in \mathbb{N}$, let $\lambda^{\tilde k}_i$ be bounded and,  for all $i\in \mathcal{R}$, $z_i^{\tilde k} \in \mathcal{X}_i$.
Then, the \admm iterates satisfy 
\[
\lim_{\rho \to \infty}x_i^{k}(\rho) - x_i^{\tilde k} \in \operatorname{null}(A_i),\quad \forall k>\tilde k.
\]
\end{proposition}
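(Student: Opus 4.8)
The plan is to read each local step \eqref{eq:locStepADM} as a quadratic penalty problem in the coupling image $A_i x_i$ and to exploit the fact that, as $\rho\to\infty$, such a penalty enforces the value $A_i z_i^k$ as a hard equality constraint. The single most useful observation is that the stationarity condition of \eqref{eq:locStepADM} can be rewritten through the dual update \eqref{eq:dualUp}: since $A_i^\top\lambda_i^k+\rho A_i^\top A_i(x_i^k-z_i^k)=A_i^\top\bigl(\lambda_i^k+\rho A_i(x_i^k-z_i^k)\bigr)=A_i^\top\lambda_i^{k+1}$, the \kkt system of the local problem reads $\nabla f_i(x_i^k)+A_i^\top\lambda_i^{k+1}+\nabla h_i(x_i^k)^\top\mu_i^k=0$ together with the usual complementarity for $\mu_i^k\ge0$. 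Hence the penalty residual $\rho A_i(x_i^k-z_i^k)$ is exactly the quantity that, in the limit, must play the role of the Lagrange multiplier of the constraint $A_i x_i=A_i z_i^k$. I would therefore compare each penalized local problem with its limiting equality-constrained counterpart $\min_{x_i}f_i(x_i)+(\lambda_i^k)^\top A_i x_i$ subject to $h_i(x_i)\le0,\ A_i x_i=A_i z_i^k$, and invoke standard quadratic-penalty convergence theory \cite{Nocedal2006}: along $\rho\to\infty$ the minimizers $x_i^k(\rho)$ converge to a minimizer of this limit problem and the residual $\rho A_i(x_i^k-z_i^k)$ converges to the associated equality multiplier, which is finite whenever that minimizer is regular.

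The body would then be an induction on $k\ge\tilde k$ showing that the target image $A_i z_i^{\tilde k}$ is preserved in the limit. For the base step $k=\tilde k$, feasibility $z_i^{\tilde k}\in\mathcal X_i$ makes $x_i=z_i^{\tilde k}$ admissible for the limit problem, so $A_i x_i^{\tilde k}(\rho)\to A_i z_i^{\tilde k}$ and, by the multiplier identification together with boundedness of $\lambda_i^{\tilde k}$, the sequence $\lambda_i^{\tilde k+1}$ stays bounded. Next I would pass to the consensus step \eqref{eq:ADMconsStep}: writing $y_i=A_i\Delta x_i$, its objective $\sum_i\frac\rho2\|y_i\|^2+(\lambda_i^{\tilde k+1})^\top y_i$ under $\sum_i y_i=-\sum_i A_i x_i^{\tilde k}$ has a right-hand side tending to $-\sum_i A_i z_i^{\tilde k}=0$ (the consensus identity satisfied after any consensus step), and since $\lambda_i^{\tilde k+1}$ is bounded the quadratic term forces $A_i\Delta x_i^{\tilde k}\to0$. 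Consequently $A_i z_i^{\tilde k+1}=A_i x_i^{\tilde k}+A_i\Delta x_i^{\tilde k}\to A_i z_i^{\tilde k}$, i.e. the target image is unchanged in the limit. The inductive step repeats this verbatim, the only new point being that $z_i^{k}$ for $k>\tilde k$ need not lie in $\mathcal X_i$; feasibility is instead carried by the image, because $A_i z_i^{k}\to A_i z_i^{\tilde k}$ and the feasible limit point $\bar x_i^{\tilde k}\in\mathcal X_i$ (with $A_i\bar x_i^{\tilde k}=A_i z_i^{\tilde k}$) still makes the penalty term vanish in the limit, keeping the local limit problems well posed. Chaining the equalities yields $A_i x_i^{k}(\rho)\to A_i z_i^{\tilde k}$ for every $k>\tilde k$, while also $A_i x_i^{\tilde k}(\rho)\to A_i z_i^{\tilde k}$, so $A_i\bigl(x_i^{k}-x_i^{\tilde k}\bigr)\to0$, which is precisely the claim that $x_i^{k}-x_i^{\tilde k}\in\operatorname{null}(A_i)$ in the limit.

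The step I expect to be the main obstacle is the boundedness of the dual increment $\rho A_i(x_i^k-z_i^k)$ as $\rho\to\infty$. The crude estimate obtained by evaluating the local objective at a feasible point only gives $\|A_i(x_i^k-z_i^k)\|=O(\rho^{-1/2})$, so that $\rho A_i(x_i^k-z_i^k)=O(\rho^{1/2})$ could in principle diverge and break the induction; ruling this out is exactly where the regularity hypothesis enters, via the identification of this residual with the finite equality multiplier of the limit problem and the fact that linear independence of the active constraint gradients yields $O(\rho^{-1})$ constraint violation and bounded $\mu_i^k$. A secondary care point is that only the component of $\lambda_i^{k+1}$ in $\operatorname{range}(A_i)$ is controlled, which is also the only component entering \eqref{eq:locStepADM} and \eqref{eq:ADMconsStep}, the inert null-space part remaining equal to that of $\lambda_i^{\tilde k}$ and hence bounded; and that all limits should first be extracted along subsequences furnished by compactness of $\mathcal X_i$, with uniqueness of the regular minimizers promoting these subsequential limits to genuine limits.
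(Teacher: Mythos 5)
Your proposal is correct and follows essentially the same route as the paper's own proof: it uses the same four ingredients---penalty-induced convergence $A_i x_i^{\tilde k}(\rho)\to A_i z_i^{\tilde k}$ in the local step, boundedness of $A_i^\top\lambda_i^{\tilde k+1}$ via the stationarity condition combined with the dual update and the regularity assumption, vanishing of $A_i\Delta x_i^{\tilde k}$ in the consensus step (your coercivity argument is just a quantitative version of the paper's contradiction argument), and propagation over iterations (your induction on the image $A_i z_i^k$ is the paper's ``the local problem does not change'' step). If anything, your treatment is slightly more careful than the paper's on two points the paper glosses over, namely that $z_i^k$ for $k>\tilde k$ need not lie in $\mathcal{X}_i$ and that only the $\operatorname{range}(A_i^\top)$ component of $\lambda_i^{k+1}$ is controlled, but the underlying argument is the same.
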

	\begin{proof}
The proof is divided into four steps.	Steps 1)-3) establish technical properties used to derive the above assertion in Step 4).
			
	\underline{Step 1).} 	
	At iteration $\tilde k$ the local steps of \admm are
		\begin{align}\label{eq:ProofMin}
		x_i^{\tilde k}(\rho) = \underset{x_i\in \mathcal{X}_i}{\operatorname{argmin}}& \;f_i(x_i) + \left(\lambda_i^{\tilde k}\right)^\top A_i x_i + \frac{\rho}{2}\left\|A_i\left(x_i-z_i^{\tilde k}\right)\right\|^2_2.
		\end{align}
		Now, by assumption all $f_i$ is twice continuously differentiable (hence bounded on $\mathcal{X}_i$), $\lambda_i^{\tilde k}$ is bounded and all ${z_i^{\tilde k}\in \mathcal{X}_i}$. Thus,  for all $i \in \mathcal{R}$, $\underset{\rho \rightarrow \infty}{\lim} x_i^{\tilde k}(\rho) = z_i^{\tilde k} + v_i^{\tilde k}$ 		with $v_i^{\tilde k} \in \operatorname{null}(A_i)$.

\underline{Step 2).}
  The first-order stationarity condition of \eqref{eq:ProofMin} can be written as
	\begin{align}\label{eq:statCond}
		-\nabla f_i(x_i^{\tilde k}) - \gamma_i ^{\tilde k\top} \nabla h_i( x_i^{\tilde k}) = A_i^\top \lambda_i^{\tilde k} + \rho A_i^\top A_i\left(x_i^{\tilde k}-z_i^{\tilde k}\right),
	\end{align}
		where $\gamma_i ^{\tilde k\top}$ is the multiplier associated to $h_i$. Multiplying the multiplier update formula \eqref{eq:dualUp}  with $A_i^\top$ from the left we obtain $A_i^\top\lambda_i^{k+1} = A_i^\top\lambda_i^{k} + \rho A_i^\top A_i(x_i^k-z_i^k)$. Combined with \eqref{eq:statCond} this yields
	$		A_i^\top \lambda_i^{\tilde k+1} = - \nabla f(x_i^{\tilde k}) - \gamma ^{\tilde k\top} \nabla h_i(x_i^{\tilde k})$.
		By differentiability  of $f_i$ and $h_i$, compactness of $\mathcal{X}_i$ and regularity of $x_i^{\tilde k}$ this implies boundedness of $A_i^\top \lambda_i^{\tilde k+1}$.

\underline{Step 3).} Next, we show by contradiction that $\Delta x_i^{\tilde k} \in \operatorname{null}(A_i)$ for all $i \in \mathcal{R}$ and $\rho \rightarrow \infty$.  Recall the coordination step \eqref{eq:ConsCnstrADM} in \admm given by 
			\begin{align} \label{eq:proofQP}
			\underset{\Delta x}{\text{min}}\;\sum_{i\in \mathcal{R}} \dfrac{\rho}{2}\Delta x_i^\top A_i^\top A_i\Delta x_i + \lambda_i^{\tilde k+1\top}A_i \Delta x_i 
			\;\;\;\text{s.t.} \;                               \; \sum_{i\in \mathcal{R}}A_i(x^{\tilde k}_i+\Delta x_i) =  0. 
			\end{align}
		Observe that any $\Delta x_i^{\tilde k} \in \operatorname{null}(A_i)$ is a feasible point to \eqref{eq:proofQP} as $\sum_{i\in \mathcal{R}}A_ix^{\tilde k}_i=0$.  Consider a feasible candidate solution  $\Delta x_i \notin \operatorname{null}(A_i)$ for which $ \sum_{i\in \mathcal{R}}A_i(x^{\tilde k}_i+\Delta x_i) =  0$. Clearly, $\lambda_i^{\tilde k+1\top}A_i \Delta x_i(\rho)$ will be bounded. Hence for a sufficiently large value of $\rho$, the objective of \eqref{eq:proofQP} will be positive. However, for any $\Delta x_i  \in \operatorname{null}(A_i)$ the objective of  \eqref{eq:proofQP}  is zero, which contradicts optimality of the  candidate solution  $\Delta x_i \notin \operatorname{null}(A_i)$. Hence, choosing $\rho$ sufficiently large ensures that any minimizer of \eqref{eq:proofQP} lies in $\operatorname{null}(A_i)$.
	
	\underline{Step 4).} It remains to show  $x_i^{\tilde k + 1} = x_i^{\tilde k}$. In the last step of \admm we have $z^{\tilde k+1}=x^{\tilde k} + \Delta x^{\tilde k}$. Given Steps 1-3) this yields $z^{\tilde k+1}=z^{\tilde k} + v^{\tilde k} + \Delta x^{\tilde k}$ and hence
	\[ \left\|A_i\left(x_i-z_i^{ \tilde k+1}\right)\right\|^2_2=\left\|A_i\left(x_i-z_i^{\tilde k} + v_i^{\tilde k} + \Delta x_i^{\tilde k}\right)\right\|^2_2=\left\|A_i\left(x_i-z_i^{\tilde k}\right)\right\|^2_2.\]
	Observe that this implies that, for $\rho \rightarrow \infty$, problem $\eqref{eq:ProofMin}$ does not change from step $\tilde k$ to $\tilde k+1$. This proves the assertion. $\hfill \blacksquare$
	\end{proof}}
\begin{corollary}[Deterministic code, feasibility, $\rho \rightarrow \infty$ implies $x_i^{k+1}= x_i^k$]\,\\
Assuming that the local subproblems in \admm are solved deterministically; i.e. same problem data yields the same solution.
Then under the conditions of  Proposition \ref{lem:ADMMfeas} and for $\rho \rightarrow \infty$, once \admm generates a feasible point $x_i^{\tilde k}$ to Problem \eqref{eq:OPF}, or whenever it is initialized at a feasible point, it will stay at this point  for all subsequent $k>\tilde k$.
\end{corollary}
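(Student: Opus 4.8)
The plan is to bootstrap the corollary directly from Proposition \ref{lem:ADMMfeas}, whose Step 4) already does most of the work: it shows that, as $\rho \to \infty$, the penalty term in the local subproblem \eqref{eq:ProofMin} is unchanged from iteration $\tilde k$ to $\tilde k+1$, because $z_i^{\tilde k+1} - z_i^{\tilde k} = v_i^{\tilde k} + \Delta x_i^{\tilde k} \in \operatorname{null}(A_i)$ and hence $A_i z_i^{\tilde k+1} = A_i z_i^{\tilde k}$. First I would make explicit that in this limit the linear term $(\lambda_i^{\tilde k})^\top A_i x_i$ plays no role in selecting the minimizer: the penalty forces $A_i x_i = A_i z_i^{\tilde k}$, and on that affine slice the linear term is constant, so the limiting local problem reduces to $\min_{x_i \in \mathcal{X}_i} f_i(x_i)$ subject to $A_i x_i = A_i z_i^{\tilde k}$. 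This problem depends on the iterate history only through the invariant quantity $A_i z_i^{\tilde k}$, and its feasible set is nonempty since it contains $x_i^{\tilde k}$ itself.

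Granting this, the key step is to invoke the determinism hypothesis. Because the limiting subproblem data at iteration $\tilde k + 1$ is identical to that at iteration $\tilde k$, a solver that returns the same output for the same input yields $x_i^{\tilde k+1} = x_i^{\tilde k}$ for every $i \in \mathcal{R}$. I would then close the loop by induction: since $x_i^{\tilde k+1} = x_i^{\tilde k}$, the dual update \eqref{eq:dualUp} and the coordination problem \eqref{eq:proofQP} again see the same effective data (the only quantities entering them in the limit are $A_i x_i$ and $A_i^\top \lambda_i$, both of which repeat by Step 2) of the proposition), so determinism reproduces the same $\Delta x_i^{\tilde k}$, hence the same $A_i z_i$, and the limiting local problem at iteration $\tilde k + 2$ again coincides with the one at $\tilde k$. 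Iterating gives $x_i^k = x_i^{\tilde k}$ for all $k > \tilde k$. The feasibility hypothesis enters only to start the recursion: a feasible initialization, or the first feasible iterate $x_i^{\tilde k}$, supplies both $z_i^{\tilde k} \in \mathcal{X}_i$ and $\sum_{i \in \mathcal{R}} A_i x_i^{\tilde k} = 0$, which are exactly the assumptions under which Proposition \ref{lem:ADMMfeas} applies at index $\tilde k$.

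The delicate point --- and the reason the determinism hypothesis is genuinely needed rather than cosmetic --- is that Proposition \ref{lem:ADMMfeas} only concludes $x_i^{k} - x_i^{\tilde k} \in \operatorname{null}(A_i)$, i.e. the iterate may a priori wander within $\operatorname{null}(A_i)$ without the objective or the consensus gap detecting it, precisely because the coordination problem \eqref{eq:proofQP} has a non-unique minimizer (every consensus-compatible $\Delta x_i \in \operatorname{null}(A_i)$ is optimal in the limit). Determinism is what forbids this drift: feeding identical data to the local solver at every iteration forces the returned point to repeat exactly, upgrading ``differs by a null-space vector'' to ``does not move.'' I therefore expect the main obstacle to be the clean verification that the proposition's hypotheses reproduce along the iteration --- in particular that the limiting local subproblem remains well posed, its slice staying nonempty because $x_i^{\tilde k}$ always lies in it --- rather than the one-step argument itself, which is immediate once the determinism hypothesis is in hand.
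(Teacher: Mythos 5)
Your proposal is correct and follows essentially the same route as the paper, which states the corollary without a separate proof, treating it as immediate from Step 4 of Proposition \ref{lem:ADMMfeas} (the limiting local subproblem does not change from $\tilde k$ to $\tilde k+1$) combined with the determinism hypothesis---exactly your bootstrap-plus-induction argument. Your extra observation that the dual update cannot alter the limiting subproblem, because the linear term $(\lambda_i)^\top A_i x_i$ is constant on the slice $A_i x_i = A_i z_i^{\tilde k}$, is a detail the paper glosses over (it only argues invariance of the penalty term even though $\lambda_i^{\tilde k+1}\neq\lambda_i^{\tilde k}$ in general), and including it makes the argument tighter than the original.
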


The above corollary explains the behavior of \admm for large $\rho$ in combination with feasible initialization often used in power systems \cite{Guo2017,Erseghe2014a}. 
Despite feasible iterates are desirable from a power systems point of view, the findings above imply that high values of $\rho$ limit progress in terms of minimizing the objective. 

\begin{remark}[Behavior of \aladin for $\rho \to \infty$]\,\\
Note that for $\rho \to \infty$, \aladin behaves different than \admm.  While the local problems in \aladin  behave similar to \admm, the coordination step in \aladin is equivalent to a sequential quadratic programming step. This helps avoiding premature convergence and it ensures decrease of $f$ in the coordination step \cite{Houska2016}.
\end{remark}

\section{Conclusions}
This method-oriented work investigated the interplay of penalization of consensus violation and feasible initialization in \admm. We found that---despite often working reasonably with a \emph{good choice} of $\rho$ and infeasible initialization---in case of feasible initialization combined with large values of $\rho$ \admm typically stays feasible yet it may stall at a suboptimal solution.  We provided analytical results supporting this observation. 
However, computing a feasible initialization is itself a problem of almost the same complexity as the full \opf problem; in some sense partially jeopardizing the advantages of distributed optimization methods. 
Thus distributed methods providing rigorous convergence guarantees while allowing for infeasible initialization are of interest. 
One such alternative method is \aladin \cite{Houska2016} exhibiting convergence properties at cost of an enlarged communication overhead and a more complex coordination step \cite{Engelmann2019}.


%
\bibliographystyle{abbrv}


\bibliography{paper}

\newpage

\end{document}